\newtheorem{theorem}{Theorem}
\newcommand\jpcomp[1] {j_{\mathrm{p};{#1}}}
\newcommand\jpvec {{\mathbf{j}_{\mathrm{p}}}}
\newcommand\jpvecix[1] {\mathbf{j}_{\mathrm{p};{#1}}}
\newcommand\jmvec {{\mathbf{j}_{\mathrm{m}}}}
\newcommand{\commentout}[1] {}
\begin{document}

\title{Universal lower bounds on the kinetic energy of electronic systems with noncollinear magnetism} 

\author{Erik I. Tellgren}
\email{erik.tellgren@kjemi.uio.no}
\affiliation{Hylleraas Centre for Quantum Molecular Sciences, Department of Chemistry,  University of Oslo, P.O. Box 1033 Blindern, N-0315 Oslo, Norway}

\begin{abstract}
The distribution of noncollinear magnetism in an electronic system provides information about the kinetic energy as well as some kinetic energy densities. Two different everywhere-positive kinetic densities related to the Schr\"odinger--Pauli Hamiltonian are considered. For one-electron systems described by a single Pauli spinor, the electron density, spin density and current density completely determines these kinetic energy densities. For many-electron systems, lower bounds on the kinetic energy densities are proved. These results generalize a lower bound due to von Weizs\"acker, which is based on the electron density alone and plays an important role in density functional theory. The results have applications in extensions of density functional theory that incorporate noncollinear spin densities and current densities.
\end{abstract}

\pacs{PACS numbers: 31.15.E, 71.15.Mb}

\maketitle 

\section{Introduction}

Electronic states with noncollinear spin magnetization occur in molecules and materials subject to non-uniform magnetic fields, geometric frustration, or relativistic spin-orbit coupling. A global spin quantization axis is energetically unfavorable in such systems and, unlike the collinear case, the direction of local spin magnetic moments therefore varies over space~\cite{COEY_CJP65_1210}. A paradigmatic example is a spiral spin-density wave~\cite{OVERHAUSER_PR128_1437,TSUNODA_JPCM1_10427}. The distribution of noncollinear magnetism within a system in principle provides information about the underlying many-body wave function (or mixed state), and therefore also about other properties. In the present work, it is shown that the noncollinear spin magnetization density, together with the electron density and  current density, provides particularly direct information about the non-relativistic kinetic energy and kinetic energy density: For a single-electron system, the exact kinetic energy density is a simple function of these quantities. For a general many-electron system, the same simple function provides a lower bound the actual kinetic energy density. This type of bound is a general consequence of the non-relativistic quantum-mechanical description and no quantum state can give rise to densities that violate the bound.

An arbitrary $N$-electron state can be represented by a density matrix $\Gamma(\mathbf{x}_1,\ldots,\mathbf{x}_N;\mathbf{x}'_1,\ldots,\mathbf{x}'_N)$, where $\mathbf{x}_k = (\mathbf{r}_k,\omega_k)$ contains the spatial and spin coordinates. Integrating out all but one particle coordinate yields the reduced one-particle reduced density matrix $\gamma(\mathbf{x},\mathbf{x}')$, which in turn can be decomposed into natural occupation numbers (eigenvalues) $n_k$ and natural Pauli spinors (eigenvectors) $\phi_k(\mathbf{x})$,
\begin{equation}
    \gamma(\mathbf{x},\mathbf{x}') = \sum_{l=1}^{\infty} \phi_l(\mathbf{x}) \, \phi_l(\mathbf{x}')^*,
\end{equation}
where the occupation number is absorbed into the normalization, $\langle \phi_l | \phi_l \rangle = n_l$. The occupation numbers can be fractional, but always satisfy $n_l \geq 0$ and $\sum_l n_l = N$, and the decomposition is valid irrespective of whether the state $\Gamma$ is weakly or strongly correlated. Each natural spinor contributes additively to the total electron density $\rho(\mathbf{r}) = \sum_l \rho_l(\mathbf{r})$, spin density $\mathbf{m}(\mathbf{r}) = \sum_l \mathbf{m}_l(\mathbf{r})$, and paramagnetic current density $\jpvec(\mathbf{r}) = \sum_l \jpvecix{l}(\mathbf{r})$. Understanding $\phi_l(\mathbf{r}) = ( \phi_l(\mathbf{r},\uparrow), \phi_l(\mathbf{r}, \downarrow) )^T$ as a column vector containing the spin up and down component, the individual contributions are given by~\footnote{SI-based atomic units are used in this work.}
\begin{equation}
   \rho_l(\mathbf{r}) = \phi_l(\mathbf{r})^{\dagger} \phi_l(\mathbf{r}), \quad \mathbf{m}_l(\mathbf{r}) = \frac{1}{2} \phi_l(\mathbf{r})^{\dagger} \boldsymbol{\sigma} \phi_l(\mathbf{r}),
\end{equation}
where $\boldsymbol{\sigma}$ is a vector of Pauli spin matrices, and
\begin{equation}
   \jpvecix{l}(\mathbf{r}) = \mathrm{Re} \, \phi_l(\mathbf{r})^{\dagger} \mathbf{p} \phi_l(\mathbf{r}),
\end{equation}
where $\mathbf{p} = -i\nabla$ is the canonical momentum operator. Any spin densities that arise from a pure or mixed state satisfy the $N$-representability condition $\rho(\mathbf{r}) \geq 2 |\mathbf{m}(\mathbf{r})|$ everywhere in space~\cite{GONTIER_PRL111_153001}. Densities that arise from a single spinor satisfy the stronger condition $\rho = 2 |\mathbf{m}|$.

The standard, everywhere positive kinetic energy density is given $\tau(\mathbf{r}) = \sum_l \tau_l(\mathbf{r})$, where
\begin{equation}
   \tau_l(\mathbf{r}) = \frac{1}{2} \big( \mathbf{p} \phi_l(\mathbf{r}) \big)^{\dagger} \cdot \mathbf{p} \phi_l(\mathbf{r}).
\end{equation}
An alternative positive kinetic energy density, that integrates to the same total canonical kinetic energy, is $\bar{\tau}(\mathbf{r}) = \sum_l \bar{\tau}_l(\mathbf{r})$, where
\begin{equation}
   \bar{\tau}_l(\mathbf{r}) = \frac{1}{2} \big( \boldsymbol{\sigma}\cdot \mathbf{p} \phi_l(\mathbf{r}) \big)^{\dagger} \big( \boldsymbol{\sigma}\cdot \mathbf{p} \phi_l(\mathbf{r}) \big).
\end{equation}
This kinetic energy density is associated with the Pauli kinetic energy operator $(\boldsymbol{\sigma}\cdot\mathbf{p})^2/2$. The difference between the two densities is
\begin{equation}
  \bar{\tau}_l(\mathbf{r}) - \tau_l(\mathbf{r}) = \frac{i}{2} \big(\mathbf{p} \phi_l(\mathbf{r}) \big)^{\dagger} \big( \boldsymbol{\sigma}\times\mathbf{p} \phi_l(\mathbf{r}) \big).
\end{equation}

One of the earliest estimates of the kinetic energy from the particle density was established by von Weizs\"acker in the context of nuclear physics~\cite{WEIZSACKER_ZP96_431}, though it also applies to generic electronic systems and the electron density. The original bound takes the form $|\nabla\rho|^2/8\rho \leq \tau$. A more recent form that takes into account orbital magnetization and proper gauge dependence is $\tau_{\mathrm{W}}(\rho,\nabla\rho,\jpvec) = |\nabla\rho|^2/8\rho + |\jpvec|^2/2\rho \leq \tau$~\cite{BATES_JCP137_164105}, and the present work incorporates noncollinear spin densities as well. The inequality $\tau_{\mathrm{W}} \leq \tau$ is universally valid. Moreover, when the densities arise from a single spatial orbital, the bound becomes saturated, $\tau_{\mathrm{W}}(\rho,\nabla\rho,\jpvec) = \tau$. This occurs for a single electron occupying a spin-orbital with a collinear spin density and for an uncorrelated two-electron system occupying paired spin-orbitals. For densities arising from a single Pauli spinor, equality generally does not hold. Similar remarks apply to the recent work by Gontier~\cite{GONTIER_PRL111_153001} where three lower bounds on the kinetic energy were reported as part of a characterization of $N$-representable density pairs $(\rho,\mathbf{m})$. In general, these bounds remain inequalities also for densities arising from single spinors.

The von Weizs\"acker bound plays an important role in density-functional theory (DFT)~\cite{PARR89}, where it is one of the simplest meaningful density functionals. The fact that it provides a rigorous lower bound on the kinetic energy makes it a useful formal tool~\cite{LIEB_IJQC24_243}. The fact that the bound becomes an equality for single-orbital regions has made it a useful ingredient in orbital-free kinetic energy functionals~\cite{PARR89,WANG_TMCPC2000} as well as in a class of density-functional approximations (meta-GGAs) that use a kinetic energy density to detect regions of overlapping orbitals~\cite{BECKE_PRA39_3761,PERDEW_PRL82_2544,SUN_PRL111_106401}. It has also inspired quantitative measures of electron localization~\cite{BECKE_JCP92_5397,SILVI_N371_683}. To date, these concepts have been restricted to collinear spin densities.

As noncollinear magnetism is beyond the formal scope of standard DFT, the proper setting for generalizations is noncollinear spin-density-functional-theory (nc-SDFT)~\cite{KUBLER_JPF_18_469}  or current-density-functional-theory (CDFT)~\cite{VIGNALE_PRL59_2360,VIGNALE_PRB37_10685}. In nc-SDFT there is an exact, universal density functional that depends on the pair $(\rho,\mathbf{m})$ of the total density $\rho$ and spin density $\mathbf{m}$. In CDFT, the universal functional depends on either the triple $(\rho,\jpvec,\mathbf{m})$ or the pair $(\rho,\jmvec)$~\cite{CAPELLE_PRL78_1872,TELLGREN_PRA86_062506}. Here, $\jpvec$ is the paramagnetic current density and $\jmvec = \jpvec + \nabla\times\mathbf{m}$ is the magnetization current density that also includes a spin contribution. While the formulation in terms of $(\rho,\jmvec)$ couples spatial and spin degrees of freedom, and breaks the spin rotation invariance of the $(\rho,\jpvec,\mathbf{m})$ formulation, it is a natural framework when Lieb's convex analysis formulation of DFT~\cite{LIEB_IJQC24_243} is extended to CDFT~\cite{TELLGREN_PRA86_062506}. Though recent work has focused on the $\mathbf{m}$-dependence of the exchange-correlation energy~\cite{BULIK_PRB87_035117,EICH_PRL111_156401,EICH_PRB88_245102}, little work has been done on the $\mathbf{m}$- or $\jmvec$-dependence of the kinetic energy.

\section{One-electron systems}

The function $\tau_{\mathrm{W}}$ can be modified to become responsive to the spin density. A common pragmatic approach to noncollinear density functionals is to apply functionals derived for the collinear case also to noncollinear densities. To enable this, the densities $\rho_{\pm} = \tfrac{1}{2} \rho \pm |\mathbf{m}|$ are substituted for the spin up and down densities~\cite{ESCHRIG_JCC20_23,VANWULLEN_JCC23_779}. The corresponding generalization of the (spin-resolved) $\tau_{\mathrm{W}}$ is
\begin{equation}
   \tau_{\mathrm{eig}}(\rho,\jpvec,\mathbf{m}) = \frac{|\nabla \rho_{+}|^2}{8\rho_{+}} + \frac{|\nabla \rho_{-}|^2}{8\rho_{-}} + \frac{|\jpvec|^2}{2\rho}.
\end{equation}
However, this choice does not reproduce an exact kinetic energy density in the one-spinor case. Instead, define the densities
\begin{equation}
  \label{eq2CvonW}
  \tau_{\mathrm{m}}(\rho,\jpvec,\nabla \mathbf{m}^T) = \frac{|\jpvec|^2}{2\rho} + \sum_{a,b} \frac{(\nabla_a m_b)^2}{2\rho},
\end{equation}
with summation over all Cartesian components $a,b \in \{x,y,z\}$, and
\begin{equation}
    \label{eq2CvonWalt}
  \bar{\tau}_{\mathrm{m}}(\rho,\jmvec,\nabla\cdot\mathbf{m}) = \frac{|\jmvec|^2 + (\nabla\cdot \mathbf{m})^2}{2\rho}.
\end{equation}

The following theorem shows that the above functions reproduce the exact $\tau$ and $\bar{\tau}$ for arbitrary one electron systems.
\begin{theorem}[]
\label{thmTmEqK1}
Let $\rho$, $\mathbf{m}$, $\jpvec$, $\tau$, and $\bar{\tau}$ be densities arising from a single, arbitrary Pauli spinor. At every point $\mathbf{r}$ with non-zero density $\rho(\mathbf{r}) \neq 0$, it holds that (a) $\tau = \tau_{\mathrm{m}}(\rho,\jpvec,\nabla\mathbf{m}^T)$ and (b) $\bar{\tau} = \bar{\tau}_{\mathrm{m}}(\rho,\jmvec,\nabla\cdot\mathbf{m})$.
\end{theorem}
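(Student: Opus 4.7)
\emph{Proof plan.} The natural tool will be a pointwise polar decomposition of the single Pauli spinor, $\phi(\mathbf{r}) = \sqrt{\rho(\mathbf{r})}\,e^{i\chi(\mathbf{r})}\,\xi(\mathbf{r})$, with $\xi$ a unit 2-spinor ($\xi^\dagger\xi = 1$) and $\chi$ a real phase. The spin direction $\hat{\mathbf{n}} = \xi^\dagger\boldsymbol{\sigma}\xi$ then satisfies $\mathbf{m} = (\rho/2)\hat{\mathbf{n}}$, and normalization of $\xi$ forces the Berry connection $\mathbf{a} := -i\xi^\dagger\nabla\xi$ to be real. A short computation will give $\jpvec = \rho(\nabla\chi + \mathbf{a})$, so the decomposition carries a $U(1)$ gauge redundancy $(\chi,\xi) \mapsto (\chi - \lambda, e^{i\lambda}\xi)$ under which the physical densities are invariant.

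For part (a), I would substitute the polar form into $\tau = \tfrac{1}{2}|\nabla\phi|^2$ and collect terms to obtain
\begin{equation*}
  \tau = \frac{|\nabla\rho|^2}{8\rho} + \frac{|\jpvec|^2}{2\rho} + \frac{\rho}{2}\bigl(\|\nabla\xi\|^2 - |\mathbf{a}|^2\bigr),
\end{equation*}
where $\|\nabla\xi\|^2 = \sum_a (\partial_a\xi)^\dagger(\partial_a\xi)$; the cross terms between $\sqrt{\rho}\,\nabla\xi$ and the density/phase factors supply exactly the $\rho(\nabla\chi)\cdot\mathbf{a}$ needed to complete the square. The remaining task is to identify the last bracket with $\sum_{a,b}(\partial_a m_b)^2/(2\rho) - |\nabla\rho|^2/(8\rho)$. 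Writing $m_b = (\rho/2)\hat n_b$ and using $\hat{\mathbf{n}}\cdot\partial_a\hat{\mathbf{n}} = 0$, one finds $\sum_b(\partial_a m_b)^2 = (\partial_a\rho)^2/4 + (\rho^2/4)\sum_b(\partial_a\hat n_b)^2$, reducing the claim to the spinor identity
\begin{equation*}
  \sum_b(\partial_a\hat n_b)^2 = 4\bigl(|\partial_a\xi|^2 - |\xi^\dagger\partial_a\xi|^2\bigr),
\end{equation*}
which follows from the Pauli completeness relation $\sum_b(\sigma_b)_{ij}(\sigma_b)_{kl} = 2\delta_{il}\delta_{jk} - \delta_{ij}\delta_{kl}$ together with $\xi^\dagger\xi = 1$.

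For part (b), having established (a), the most economical route is to match the two differences $\bar\tau - \tau = \tfrac{i}{2}(\nabla\phi)^\dagger\cdot(\boldsymbol{\sigma}\times\nabla\phi)$ and $\bar\tau_\mathrm{m} - \tau_\mathrm{m}$. Expanding the left-hand side in polar form, the $\xi$-independent contribution produces $-(\mathbf{m}/\rho)\cdot(\nabla\rho\times\nabla\chi)$, the cross terms assemble into $\jpvec\cdot(\nabla\times\mathbf{m})/\rho$ via $\partial_a\hat n_b = 2\,\mathrm{Re}(\xi^\dagger\sigma_b\partial_a\xi)$, and the purely $\xi$-quadratic piece is again handled by Pauli completeness. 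On the right-hand side, expanding $|\jmvec|^2 = |\jpvec|^2 + 2\jpvec\cdot(\nabla\times\mathbf{m}) + |\nabla\times\mathbf{m}|^2$ and using $|\nabla\times\mathbf{m}|^2 = \sum_{a,b}(\partial_a m_b)^2 - \sum_{a,b}\partial_a m_b\partial_b m_a$ reduces the match to an identity purely in derivatives of $\mathbf{m}$, itself a consequence of the same Pauli-completeness manipulations applied to $\hat{\mathbf{n}}$.

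The main obstacle will be the algebraic bookkeeping in (b): tracking how the combination $(\nabla\cdot\mathbf{m})^2 - \sum_{a,b}\partial_a m_b\partial_b m_a$ emerges from the spinor side. I expect the cleanest organization to exploit gauge freedom at a single point: a local $U(1)$ rotation can set $\xi^\dagger\nabla\xi|_{\mathbf{r}_0} = 0$, and an $SU(2)$ rotation can align $\hat{\mathbf{n}}(\mathbf{r}_0) = \hat{\mathbf{z}}$, reducing $\partial_a\xi|_{\mathbf{r}_0}$ to a single complex number per Cartesian direction. Since both sides of each claimed identity are manifestly gauge-covariant, verifying them in this gauge establishes the result pointwise, and hence everywhere.
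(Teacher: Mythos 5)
Your proposal is sound and, for part (a), takes a genuinely different route from the paper. The paper works pointwise: it fixes an arbitrary $\mathbf{r}_0$, rotates the spin frame so that $\mu(\mathbf{r}_0)=0$, uses gauge invariance of $\tau-\tau_{\mathrm{m}}$ to make the surviving component $\lambda$ real near $\mathbf{r}_0$, and then verifies both identities by direct computation in components (at that point $\jpvec=\mathbf{0}$, $\jmvec=\nabla\times\mathbf{m}$, which collapses the algebra). Your polar decomposition $\phi=\sqrt{\rho}\,e^{i\chi}\xi$ instead yields a manifestly gauge-covariant, frame-free derivation: the splitting $\tau=|\nabla\rho|^2/8\rho+|\jpvec|^2/2\rho+\tfrac{\rho}{2}(\|\nabla\xi\|^2-|\mathbf{a}|^2)$ is correct (I checked the cross terms), and the Fubini--Study-type identity $\sum_b(\partial_a\hat n_b)^2=4(|\partial_a\xi|^2-|\xi^\dagger\partial_a\xi|^2)$ does follow from the Pauli completeness relation together with $\xi^\dagger\partial_a\xi$ being purely imaginary, so part (a) closes cleanly and arguably more transparently than in the paper (it makes explicit that the spin-gradient term is $\rho/2$ times the Fubini--Study metric pulled back by $\xi$, i.e.\ the quantum-geometric content of the bound). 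For part (b) your strategy of matching $\bar\tau-\tau$ against $\bar\tau_{\mathrm{m}}-\tau_{\mathrm{m}}$ is valid given (a), but it remains a sketch; your own proposed endgame (set $\xi^\dagger\nabla\xi|_{\mathbf{r}_0}=0$ by a $U(1)$ choice and align $\hat{\mathbf{n}}(\mathbf{r}_0)=\hat{\mathbf{z}}$ by an $SU(2)$ rotation, then verify pointwise) is precisely the paper's method, so the two approaches converge there. Watch one detail in that sketch: the sign of the $(\mathbf{m}/\rho)\cdot(\nabla\rho\times\nabla\chi)$ contribution depends on your ordering conventions for $\boldsymbol{\sigma}\times\nabla$, and such terms must cancel exactly against pieces of $2\jpvec\cdot(\nabla\times\mathbf{m})/2\rho$; this is harmless if you finish in the fixed gauge where $\jpvec(\mathbf{r}_0)=\mathbf{0}$, but would need care in a fully covariant completion.
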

\begin{proof}
Part (a): A general Pauli spinor takes the form $\phi(\mathbf{r}) = (\lambda(\mathbf{r}), \mu(\mathbf{r}))^T$. Now fix an arbitrary point $\mathbf{r}_0$ with non-vanishing density.  Without loss of generality, the coordinate system can be chosen so that $z$-axis is aligned to the spin direction $\mathbf{m}(\mathbf{r}_0)$ at $\mathbf{r}_0$, i.e.\ $\mu(\mathbf{r}_0) = 0$. Moreover, it is easily verified that the difference $\tau - \tau_{\mathrm{m}}$ is invariant under gauge transformations. The fact $\rho(\mathbf{r}_0) \neq 0$ now implies $\lambda(\mathbf{r}_0) \neq 0$, which is sufficient to guarantee the existence of a gauge transformation $\lambda(\mathbf{r}) \mapsto \lambda(\mathbf{r}) e^{i\chi(\mathbf{r})}$, $\mu(\mathbf{r}) \mapsto \mu(\mathbf{r}) e^{i\chi(\mathbf{r})}$ that makes $\lambda(\mathbf{r})$ real-valued in a small neighborhood around $\mathbf{r}_0$. Without loss of generality, one may therefore take $\lambda(\mathbf{r}_0)$ and $\nabla \lambda(\mathbf{r}_0)$ to be real-valued. Hence, $\rho(\mathbf{r}_0) = \lambda(\mathbf{r}_0)^2$, $\jpvec(\mathbf{r}_0) = 0$, $m_x(\mathbf{r}_0) = m_y(\mathbf{r}_0) = 0$, $m_z(\mathbf{r}_0) = \tfrac{1}{2} \lambda(\mathbf{r}_0)^2$, and
\begin{equation}
  \nabla_a \mathbf{m} = \frac{1}{2} \nabla_a \phi^{\dagger} \boldsymbol{\sigma} \phi
     = \frac{1}{2}
   \begin{pmatrix}
     \lambda (\nabla_a \mu + \nabla_a \mu^*) \\
   \lambda (-i \nabla_a \mu + i \nabla_a \mu^*) \\
    2 \lambda \nabla_a \lambda
   \end{pmatrix},
\end{equation}
where $a \in \{x,y,z\}$ is a Cartesian component and all quantities are evaluated at $\mathbf{r}_0$. Insertion into the definition of $\tau_{\mathrm{m}}(\rho,\jpvec,\nabla\mathbf{m}^T)$ yields
\begin{equation}
   \tau_{\mathrm{m}} = \frac{4 \lambda^2 |\nabla \mu|^2 + 4 \lambda^2 |\nabla \lambda|^2}{8\rho} = \frac{1}{2} |\nabla \lambda|^2 + \frac{1}{2} |\nabla \mu|^2,
\end{equation}
which coincides with $\tau$. Because the point $\mathbf{r}_0$ was arbitrary, part (a) of the theorem follows.

Part (b): Under the above stipulations,
\begin{equation}
   2 \nabla\cdot\mathbf{m} = \lambda \nabla_x (\mu+\mu^*) -i \lambda \nabla_y(\mu-\mu^*) + \lambda \nabla_z \lambda
\end{equation}
and
\begin{equation}
   \jmvec = \nabla\times\mathbf{m} =
    \frac{1}{2} \lambda
   \begin{pmatrix}
       \nabla_y \lambda + i\nabla_z(\mu -  \mu^*)\\
       \nabla_z (\mu+\mu^*) - \nabla_x \lambda \\
      -i\nabla_x(\mu-\mu^*)  -\nabla_y (\mu+\mu^*)
   \end{pmatrix}
\end{equation}
and, with $\nabla_{\pm} = \nabla_x \pm i \nabla_y$,
\begin{equation}
   2\bar{\tau} = (\boldsymbol{\sigma}\cdot\mathbf{p} \phi)^{\dagger} \boldsymbol{\sigma}\cdot\mathbf{p} \phi = 
   \begin{pmatrix}
        \nabla_{-} \mu + \nabla_z \lambda \\
        \nabla_{+} \lambda  - \nabla_z \mu
   \end{pmatrix}^{\dagger}
   \begin{pmatrix}
        \nabla_{-} \mu + \nabla_z \lambda \\
        \nabla_{+} \lambda - \nabla_z \mu
   \end{pmatrix}.
\end{equation}
It is now a matter of straightforward algebra to verify that $\bar{\tau} = \bar{\tau}_{\mathrm{m}}(\rho,\jmvec,\nabla\cdot\mathbf{m})$, and thus part (b) of the theorem.
\end{proof}

\section{Many-electron systems}

For general systems, the functions $\tau_{\mathrm{m}}$ and $\bar{\tau}_{\mathrm{m}}$ are not exact, but instead provide lower bounds.
\begin{theorem}[]
\label{thmTmBound}
For densities $\rho$, $\mathbf{m}$, $\jpvec$, $\tau$, and $\bar{\tau}$ arising from arbitrary $N$-electron mixed states, (a) $\tau \geq \tau_{\mathrm{m}}(\rho,\jpvec,\mathbf{m})$ and (b) $\bar{\tau} \geq \bar{\tau}_{\mathrm{m}}(\rho,\jmvec,\nabla\cdot\mathbf{m})$.
\end{theorem}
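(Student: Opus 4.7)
The plan is to reduce the many-electron bounds to the single-spinor identities of Theorem~\ref{thmTmEqK1} and then to combine the per-spinor contributions through a convexity argument. Starting from the natural-spinor decomposition $\gamma(\mathbf{x},\mathbf{x}')=\sum_l \phi_l(\mathbf{x})\phi_l(\mathbf{x}')^*$ already introduced, every relevant density decomposes additively as $\rho=\sum_l \rho_l$, $\mathbf{m}=\sum_l \mathbf{m}_l$, $\jpvec=\sum_l \jpvecix{l}$, $\tau=\sum_l \tau_l$, and $\bar{\tau}=\sum_l \bar{\tau}_l$. Because each $\phi_l$ is an ordinary (if non-unit-normalized) Pauli spinor and both identities in Theorem~\ref{thmTmEqK1} are invariant under the rescaling $\phi_l\mapsto c\phi_l$ (both sides scaling as $|c|^2$), the single-spinor theorem applies to each natural spinor separately. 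This yields the pointwise equalities $\tau_l=\tau_{\mathrm{m}}(\rho_l,\jpvecix{l},\nabla\mathbf{m}_l^T)$ and $\bar{\tau}_l=\bar{\tau}_{\mathrm{m}}(\rho_l,\jmvecix{l},\nabla\cdot\mathbf{m}_l)$ wherever $\rho_l\neq 0$.

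The second ingredient is that summing over $l$ can only lower the value of $\tau_{\mathrm{m}}$ and $\bar{\tau}_{\mathrm{m}}$. This is a statement about the perspective function $f(x,\rho)=x^2/(2\rho)$ and its vector analogue $(\mathbf{v},\rho)\mapsto |\mathbf{v}|^2/(2\rho)$: on $\mathbb{R}\times\mathbb{R}_{+}$ it is jointly convex and positively homogeneous of degree one, and any such function is subadditive, $f(x_1+x_2,\rho_1+\rho_2)\leq f(x_1,\rho_1)+f(x_2,\rho_2)$. Applying this inequality to each quadratic-over-density building block of $\tau_{\mathrm{m}}$, namely $|\jpvecix{l}|^2/(2\rho_l)$ and the scalar terms $(\nabla_a m_{l,b})^2/(2\rho_l)$, and then invoking additivity of $\rho$, $\jpvec$, and $\nabla\mathbf{m}$ in the natural-spinor index $l$ gives
\begin{equation}
   \tau = \sum_l \tau_l = \sum_l \tau_{\mathrm{m}}(\rho_l,\jpvecix{l},\nabla\mathbf{m}_l^T) \geq \tau_{\mathrm{m}}(\rho,\jpvec,\nabla\mathbf{m}^T),
\end{equation}
which is part (a). Part (b) is handled by the same mechanism after observing that $\jmvec=\sum_l\jmvecix{l}$ (since both $\jpvec$ and $\nabla\times\mathbf{m}$ are additive) and $\nabla\cdot\mathbf{m}=\sum_l\nabla\cdot\mathbf{m}_l$ is additive trivially.

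The only genuinely delicate point is the behavior at nodes of the individual natural spinors. Wherever $\phi_l(\mathbf{r}_0)=0$, the product rule forces $\jpvecix{l}(\mathbf{r}_0)=0$ and $\nabla\mathbf{m}_l(\mathbf{r}_0)=0$ (each expression contains $\phi_l$ as a factor), so the $0/0$ summands are consistently assigned the value zero as in the lower-semicontinuous extension of the perspective function, and the trivial inequality $\tau_l\geq 0$ covers such points. I expect this to be the only step requiring any care: convexity and degree-one homogeneity of $x^2/\rho$ are standard, making the subadditivity step itself routine once stated in the right form, and no deeper analytic input is needed. The conceptual content of the theorem is really that additivity of the densities across natural spinors, combined with the subadditivity of the perspective function, transfers the one-electron equality into a many-electron inequality.
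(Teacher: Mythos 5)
Your proof is correct and follows essentially the same route as the paper: decompose the densities over natural spinors, invoke the single-spinor exactness of Theorem~\ref{thmTmEqK1} termwise, and conclude by joint subadditivity of $f(\rho,\mathbf{u})=|\mathbf{u}|^2/2\rho$ — which the paper verifies directly with Young's inequality where you instead cite convexity plus degree-one homogeneity of the perspective function, an equivalent justification. Your explicit treatment of points where an individual $\rho_l$ vanishes is a small extra point of rigor that the paper leaves implicit, but it does not alter the argument.
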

\begin{proof}
It will first be shown that $\tau_{\mathrm{m}}$ and $\bar{\tau}_{\mathrm{m}}$ are subadditive functions. The pair $(\jpvec,\nabla\mathbf{m}^T)$ can be reorganized into a 12-dimensional vector $\mathbf{u} \in \mathbb{R}^{12}$ and the pair $(\jmvec,\nabla\cdot\mathbf{m})$ into a 4-dimensional vector $\mathbf{u} \in \mathbb{R}^4$.  Hence, both functions are instances of $f(\rho,\mathbf{u}) = |\mathbf{u}|^2/2\rho$, defined for all $\rho > 0$ and $\mathbf{u} \in \mathbb{R}^K$. Using the Young's inequality $2 \mathbf{u}\cdot\mathbf{v} \leq a |\mathbf{u}|^2 + \tfrac{1}{a} |\mathbf{v}|^2$, with $a=\sigma/\rho$, one obtains the subadditivity property
\begin{equation}
 \begin{split}
  f(\rho+\sigma,\mathbf{u}+\mathbf{v}) &  \leq \frac{(1+\sigma/\rho) |\mathbf{u}|^2 + (1+\rho/\sigma) |\mathbf{v}|^2}{2(\rho+\sigma)}
         \\
      & = f(\rho,\mathbf{u}) + f(\sigma,\mathbf{v}).
 \end{split}
\end{equation}
The fact that natural spinors contribute additively to the densities now allows iterated application of the subadditivity of $\tau_{\mathrm{m}}$ and $\bar{\tau}_{\mathrm{m}}$:
\begin{equation}
  \label{eqTAUmBOUND}
 \begin{split}
  \tau_{\mathrm{m}} \Big(\sum_j \rho_j , \sum_k \nabla^T\mathbf{m}_k, \sum_l \jpvecix{l} \Big) & \leq \sum_l \tau_{\mathrm{m}}(\rho_l, \nabla^T\mathbf{m}_l, \jpvecix{l})
         \\
      & = \sum_l \tau_l = \tau,
 \end{split}
\end{equation}
  where the identification $\tau_{\mathrm{m}}(\rho_l, \nabla^T\mathbf{m}_l, \jpvecix{l}) = \tau_l$ follows because $\tau_{\mathrm{m}}$ is exact for densities arising from a single spinor (by Theorem~\ref{thmTmEqK1}). The inequality for $\bar{\tau}_{\mathrm{m}}$ follows analogously.
\end{proof}

The original von Weizs\"acker bound, $\tau_{\mathrm{W}} \leq \tau$, and the new bound, $\tau_{\mathrm{m}} \leq \tau$, are both universally valid. However, which bound is  sharper varies with the system. In a closed-shell molecule with paired electrons, the spin density vanishes and the original bound is sharper. For a single-electron system or regions sufficiently far away from a molecule with an unpaired electron, the new bound is sharper, $\tau_{\mathrm{W}} \leq \tau_{\mathrm{m}}$. A spin density-wave with nearly uniform density is another case where $\tau_{\mathrm{m}}$ is the sharper bound.  Both bounds may be improved by taking the maximum of $\tau_{\mathrm{m}}$ and $\tau_{\mathrm{W}}$ at each point in space. Alternatively, the identity $\rho = 2 |\mathbf{m}|$ for single-electron systems provides some freedom to modify the expressions for $\tau_{\mathrm{m}}$ and $\bar{\tau}_{\mathrm{m}}$ so that they yield different estimates in the many-electron case. In general, preserving subadditivity is sufficient for preserving the lower bound properties $\tau_{\mathrm{m}} \leq \tau$ and $\bar{\tau}_{\mathrm{m}} \leq \bar{\tau}$. To this end, note Gontier's result~\cite{GONTIER_PRL111_153001} that $\tau_{\mathrm{G}} \leq 4 \tau$, where~\footnote{The notation is simplified by writing $\tau_{\mathrm{G}}(\rho,\mathbf{m})$ instead of $\tau_{\mathrm{G}}(\rho,\mathbf{m},\nabla\mathbf{m}^T)$, and similarly for $\tau_{\mathrm{mG}}$ and $\tau_{\mathrm{eig}}$.}
\begin{equation}
  \tau_{\mathrm{G}}(\rho,\mathbf{m}) = \frac{\big| \nabla \sqrt{\tfrac{1}{4} \rho^2 - |\mathbf{m}|^2} \big|^2}{2\rho}.
\end{equation}
Gontier's bound can be sharpened to $\tau_{\mathrm{G}} \leq \tau$. For densities arising from a single spinor, $\tau_{\mathrm{G}}$ vanishes identically. When instead the spin density vanishes everywhere in space, $\tau_{\mathrm{G}}$ reduces to the conventional von Weizs\"acker term $|\nabla \rho|^2/8\rho$. Moreover, $\tau_{\mathrm{G}}$ tends be large when the inequality $\tau_{\mathrm{m}} \leq \tau$ holds by a large margin. In fact, defining the sum
\begin{equation}
   \tau_{\mathrm{mG}}(\rho,\jpvec,\mathbf{m}) = \tau_{\mathrm{m}}(\rho,\jpvec,\nabla\mathbf{m}^T) + \tau_{\mathrm{G}}(\rho,\mathbf{m}),
\end{equation}
the previous inequalities can now be sharpened to $\tau_{\mathrm{mG}} \leq \tau$. While $\tau_{\mathrm{G}}$ is manifestly not subadditive, the sum $\tau_{\mathrm{mG}}$ is subadditive with respect to addition of one-spinor densities, which is sufficient to prove the sharpened inequality:
\begin{theorem}
  Let $(\rho,\jpvec,\mathbf{m})$ and $(\sigma,\mathbf{k}_{\mathrm{p}},\mathbf{n})$ be density triples from an arbitrary mixed state and a single spinor, respectively. Then
  \begin{equation}
     \tau_{\mathrm{G}}(\rho+\sigma,\mathbf{m}+\mathbf{n}) \leq \tau_{\mathrm{G}}(\rho,\mathbf{m}) + \frac{\rho \sigma |\tfrac{1}{\rho} \nabla m_c - \tfrac{1}{\sigma} \nabla n_c|^2}{2(\rho+\sigma)}
  \end{equation}
  and
  \begin{equation}
     \label{eqMGSUBADD}
    \tau_{\mathrm{mG}}(\rho+\sigma,\jpvec+\mathbf{k}_{\mathrm{p}},\mathbf{m}+\mathbf{n}) \leq \tau_{\mathrm{mG}}(\rho,\jpvec,\mathbf{m}) + \tau_{\mathrm{mG}}(\sigma,\mathbf{k}_{\mathrm{p}},\mathbf{n}).
  \end{equation}
\end{theorem}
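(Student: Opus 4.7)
The plan is to view $\tau_{\mathrm{G}}$ as built from the scalar $h(\rho,\mathbf{m}) = \sqrt{\tfrac{1}{4}\rho^2 - |\mathbf{m}|^2}$, so that $\tau_{\mathrm{G}} = |\nabla h|^2/(2\rho)$. The single-spinor constraint $\sigma = 2|\mathbf{n}|$ forces $h(\sigma,\mathbf{n}) \equiv 0$, hence $\tau_{\mathrm{G}}(\sigma,\mathbf{n}) = 0$ and $\tau_{\mathrm{mG}}(\sigma,\mathbf{k}_{\mathrm{p}},\mathbf{n}) = \tau_{\mathrm{m}}(\sigma,\mathbf{k}_{\mathrm{p}},\mathbf{n})$. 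Consequently, the subadditivity target (\ref{eqMGSUBADD}) reduces to showing that the $\tau_{\mathrm{G}}$-excess produced by adding a single-spinor density triple is absorbed by the slack in the $\tau_{\mathrm{m}}$-subadditivity bound already used in Theorem~\ref{thmTmBound}.

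For the first inequality I would work pointwise in the local Cartesian frame whose $c$-axis coincides with $\hat{\mathbf{n}}(\mathbf{r}) = \mathbf{n}(\mathbf{r})/|\mathbf{n}(\mathbf{r})|$, so that $n_c = \sigma/2$, $n_b = 0$ for $b\neq c$, and $m_c = \mathbf{m}\cdot\hat{\mathbf{n}}$. In this frame the constraint $\sigma = 2|\mathbf{n}|$ gives the pointwise gradient identity $\nabla\sigma = 2\nabla n_c$, and a direct expansion using $\sigma^2/4 = |\mathbf{n}|^2$ yields the key identity
\begin{equation*}
  h(\rho+\sigma,\mathbf{m}+\mathbf{n})^2 = h(\rho,\mathbf{m})^2 + \tfrac{1}{2}\sigma(\rho - 2 m_c).
\end{equation*}
Combining this with $\nabla h_{\mathrm{tot}} = \nabla h_{\mathrm{tot}}^2/(2h_{\mathrm{tot}})$ and a suitably tuned Young's inequality $|X+Y|^2 \leq (1+t)|X|^2 + (1+t^{-1})|Y|^2$, with $t$ selected so the $X$-contribution to $|\nabla h_{\mathrm{tot}}|^2/(2(\rho+\sigma))$ matches $\tau_{\mathrm{G}}(\rho,\mathbf{m}) = |\nabla h|^2/(2\rho)$, leaves a $Y$-contribution that collapses, via $\nabla\sigma = 2\nabla n_c$, to the advertised $\rho\sigma|\tfrac{1}{\rho}\nabla m_c - \tfrac{1}{\sigma}\nabla n_c|^2/(2(\rho+\sigma))$.

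For the second inequality, I would refine the subadditivity step in the proof of Theorem~\ref{thmTmBound}. The Cauchy--Schwarz bound $|A+B|^2/(\rho+\sigma) \leq |A|^2/\rho + |B|^2/\sigma$ has the sharp and nonnegative slack $|\sigma A - \rho B|^2/(\rho\sigma(\rho+\sigma))$. Applying this scalar identity componentwise with $A_a = \nabla_a m_c$, $B_a = \nabla_a n_c$ and summing over the spatial index $a$ contributes a term $|\sigma\nabla m_c - \rho\nabla n_c|^2/(2\rho\sigma(\rho+\sigma))$ to the total slack in the $\tau_{\mathrm{m}}$-subadditivity inequality, and this matches exactly the residual produced by the first inequality. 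Discarding the remaining (non-negative) slack from the other spin components $b\neq c$ and from the $\jpvec$-term and adding the first inequality then delivers (\ref{eqMGSUBADD}).

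The main obstacle I anticipate is the algebra of the first inequality. Expanding $\nabla h_{\mathrm{tot}}^2 = \nabla h^2 + \tfrac{1}{2}\nabla[\sigma(\rho-2m_c)]$ directly produces $\nabla\rho$ together with transverse-spin gradients $\nabla n_b$ ($b\neq c$) entering through $\nabla|\mathbf{n}|$; these must either cancel or rearrange — using $\nabla\sigma = 2\nabla n_c$ and the vanishing of $n_b$ for $b\neq c$ at the chosen point, combined with the correct Young's parameter — into the compact two-term residual advertised. The rest of the calculation is routine bookkeeping once this simplification is achieved.
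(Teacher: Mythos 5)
Your overall architecture is sound --- $\tau_{\mathrm{G}}(\sigma,\mathbf{n})=0$ for the single spinor, a tuned Young's inequality for the first bound, and exact cancellation of the $\nabla m_c\cdot\nabla n_c$ cross terms against the sharp slack of the $\tau_{\mathrm{m}}$-subadditivity for the second --- and your mechanism for the second inequality is essentially the paper's. The gap is in the first inequality: the index $c$ in the residual is an \emph{implicitly summed} Cartesian index (the paper states this in its proof), not the single direction aligned with $\hat{\mathbf{n}}$, and the single-component bound you aim for is false. Concretely, take a point where $\mathbf{m}\perp\mathbf{n}$, say $\mathbf{m}=(m,0,0)$ and $\mathbf{n}=(0,0,\sigma/2)$ with $0<2m<\rho$, and choose $\nabla\rho=\nabla\sigma=0$, $\nabla\mathbf{m}=0$, $\nabla n_z=0$ but $\nabla n_x\neq 0$ (a spinor whose spin direction rotates in space). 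Then $\tau_{\mathrm{G}}(\rho,\mathbf{m})=0$ and your residual (the component along $\hat{\mathbf{n}}$) vanishes, yet with $f^2=\tfrac14\rho^2-|\mathbf{m}|^2$ and $h^2=\tfrac12\rho\sigma-2\mathbf{m}\cdot\mathbf{n}$ one finds
\begin{equation*}
  \tau_{\mathrm{G}}(\rho+\sigma,\mathbf{m}+\mathbf{n})=\frac{|\nabla f^2+\nabla h^2|^2}{8(\rho+\sigma)(f^2+h^2)}=\frac{4m^2|\nabla n_x|^2}{8(\rho+\sigma)\bigl(f^2+\tfrac12\rho\sigma\bigr)}>0.
\end{equation*}
The culprit is precisely the obstacle you flagged: the transverse gradients do not cancel. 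They enter $\nabla h^2$ through $\nabla(\mathbf{m}\cdot\mathbf{n})=n_a\nabla m_a+m_a\nabla n_a$, weighted by the transverse components of $\mathbf{m}$ (not of $\mathbf{n}$), so your local-frame identity $h_{\mathrm{tot}}^2=f^2+\tfrac12\sigma(\rho-2m_c)$ holds only at the chosen point and cannot be differentiated.

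The paper's route keeps all three components: eliminating $\nabla\rho$ and $\nabla\sigma$ via $\nabla\rho^2=4\nabla(f^2+|\mathbf{m}|^2)$ and $\nabla\sigma^2=4\nabla|\mathbf{n}|^2$ gives $\nabla h^2=\tfrac{\sigma}{\rho}\nabla f^2+\tfrac{2}{\rho}\zeta_c\nabla m_c-\tfrac{2}{\sigma}\zeta_c\nabla n_c$ with $\boldsymbol{\zeta}=\sigma\mathbf{m}-\rho\mathbf{n}$ and $c$ summed; then Young's inequality with the specific choice $a=|\boldsymbol{\zeta}|^2/(\sigma(\rho+\sigma)f^2)$, the identity $|\boldsymbol{\zeta}|^2=\rho\sigma h^2-\sigma^2 f^2$ (which is what makes the denominator $f^2+h^2$ collapse to give exactly $|\nabla f|^2/2\rho$), and a Cauchy--Schwarz step over $c$ produce the residual $\sum_c\rho\sigma\bigl|\tfrac1\rho\nabla m_c-\tfrac1\sigma\nabla n_c\bigr|^2/\bigl(2(\rho+\sigma)\bigr)$. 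With that corrected residual your second step goes through verbatim, except that you must use the subadditivity slack from \emph{every} spin component rather than discarding the $b\neq c$ ones; only the $\jpvec$-term slack is discarded.
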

\begin{proof}
   With the notation $f^2 = \tfrac{1}{4} \rho^2 - |\mathbf{m}|^2$ and $h^2 = \tfrac{1}{2} \rho \sigma - 2\mathbf{m}\cdot\mathbf{n}$, one can write
\begin{equation}
  \tau_{\mathrm{G}}(\rho+\sigma,\mathbf{m}+\mathbf{n}) = \frac{|\nabla f^2 + \nabla h^2|^2}{8(\rho+\sigma)(f^2+h^2)}.
\end{equation}
Using  $\nabla \rho^2 = 4 \nabla (f^2 + |\mathbf{m}|^2)$ and $\nabla \sigma^2 = 4 \nabla |\mathbf{n}|^2$ to eliminate occurences of $\nabla\rho$ and $\nabla\sigma$ when $\nabla h^2$ is written out yields
 \begin{equation}
   \nabla h^2 = \frac{\sigma}{\rho} \nabla f^2 + \frac{2}{\rho} \zeta_c \nabla m_c - \frac{2}{\sigma} \zeta_c \nabla n_c,
 \end{equation}
with implicit summation over Cartesian components $c$ and $\boldsymbol{\zeta} = \sigma \mathbf{m} - \rho \mathbf{n}$. Now temporarily assume that $\boldsymbol{\zeta}$ is non-zero; the final form below is also valid in the simpler case $\boldsymbol{\zeta} = \mathbf{0}$. Using the Young's inequality, $|\mathbf{u}+\mathbf{v}|^2 \leq (1+a) |\mathbf{u}|^2 + (1+\tfrac{1}{a}) |\mathbf{v}|^2$, $a > 0$, one now has
\begin{equation}
 \begin{split}
  & \tau_{\mathrm{G}}(\rho+\sigma, \mathbf{m}+\mathbf{n})  = \frac{\big| (1+\tfrac{\sigma}{\rho})^2 f \nabla f + \zeta_c (\tfrac{1}{\rho} \nabla m_c - \frac{1}{\sigma} \nabla n_c) \big|^2}{f^2+h^2}
              \\
     & \leq \frac{(1+a) (1+\tfrac{\sigma}{\rho})^2 f^2 |\nabla f|^2 + (1+\tfrac{1}{a}) |\zeta_c (\tfrac{1}{\rho} \nabla m_c - \frac{1}{\sigma} \nabla n_c) \big|^2}{f^2+h^2}.
 \end{split}
\end{equation}
For non-zero $\boldsymbol{\zeta}$, setting $a = |\boldsymbol{\zeta}|^2 / (\sigma (\rho+\sigma) f^2)$, noting the identity $|\boldsymbol{\zeta}|^2 = \rho \sigma h^2 - \sigma^2 f^2$, and overestimating the second term in the numerator above by replacing $\zeta_c$ with $|\boldsymbol{\zeta}|$ now yields
\begin{equation}
 \begin{split}
   \tau_{\mathrm{G}}(\rho+\sigma,\mathbf{m}+\mathbf{n}) &  \leq \frac{|\nabla f|^2}{2\rho} + \sum_c \frac{\rho \sigma \big|\tfrac{1}{\rho} \nabla m_c - \frac{1}{\sigma} \nabla n_c\big|^2}{2(\rho+\sigma)}.
 \end{split}
\end{equation}
The first term on the right-hand side can be identified with $\tau_{\mathrm{G}}(\rho,\mathbf{m})$, proving the first part of the theorem.

The second part of the theorem follows because cross terms involving $\nabla m_c \cdot \nabla n_c$ from $\tau_{\mathrm{G}}$ and $\tau_{\mathrm{m}}$ exactly cancel. Hence, adding $\tau_{\mathrm{m}}$ to the above inequality yields
\begin{equation}
 \begin{split}
  \tau_{\mathrm{mG}}(\rho+\sigma &, \jpvec+\mathbf{k}_{\mathrm{p}},\mathbf{m}+\mathbf{n}) \leq  \tau_{\mathrm{G}}(\rho,\mathbf{m}) + \frac{|\jpvec+\mathbf{k}_{\mathrm{p}}|^2}{2(\rho+\sigma)}
          \\
    &  + \sum_c \frac{(1+\tfrac{\sigma}{\rho})|\nabla m_c|^2 + (1+\tfrac{\rho}{\sigma}) |\nabla n_c|^2}{2(\rho+\sigma)}.
 \end{split}
\end{equation}
Identifying spin terms and noting the subadditivity of the current density term, $|\jpvec+\mathbf{k}_{\mathrm{p}}|^2/(2\rho+2\sigma) \leq |\jpvec|^2/2\rho + |\mathbf{k}_{\mathrm{p}}|^2/2\sigma$, one obtains
\begin{equation}
  \tau_{\mathrm{mG}}(\rho+\sigma,\jpvec+\mathbf{k}_{\mathrm{p}},\mathbf{m}+\mathbf{n}) \leq  \tau_{\mathrm{mG}}(\rho,\jpvec,\mathbf{m}) + \tau_{\mathrm{m}}(\sigma,\mathbf{k}_{\mathrm{p}},\mathbf{n}).
\end{equation}
Finally, $\tau_{\mathrm{G}}(\sigma,\mathbf{n})$ is identically zero and can be added to produce to the form in Eq.~\eqref{eqMGSUBADD}, completing the proof.
\end{proof}

\begin{theorem}[]
\label{thmTmGBound}
For densities $\rho$, $\mathbf{m}$, $\jpvec$, $\tau$ arising from an arbitrary mixed state, $\tau \geq \tau_{\mathrm{mG}}(\rho,\jpvec,\mathbf{m})$.
\end{theorem}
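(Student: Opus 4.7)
The plan is to combine the natural spinor decomposition of the density matrix, as used in Theorem~\ref{thmTmBound}, with the asymmetric subadditivity inequality~\eqref{eqMGSUBADD} just established. The key observation is that the subadditivity inequality is tailored exactly for combining one mixed-state density triple with a one-spinor density triple, which matches the way a general mixed state is built from its natural spinors. So the proof will be a telescoping application of the previous theorem.

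First I would introduce the natural spinor decomposition, writing $\rho = \sum_l \rho_l$, $\mathbf{m} = \sum_l \mathbf{m}_l$, $\jpvec = \sum_l \jpvecix{l}$, and $\tau = \sum_l \tau_l$, where each $(\rho_l,\jpvecix{l},\mathbf{m}_l)$ arises from the single Pauli spinor $\phi_l$. For each such one-spinor triple, Theorem~\ref{thmTmEqK1}(a) gives $\tau_{\mathrm{m}}(\rho_l,\jpvecix{l},\nabla\mathbf{m}_l^T) = \tau_l$, and the one-spinor identity $\rho_l = 2|\mathbf{m}_l|$ makes $\tfrac{1}{4}\rho_l^2 - |\mathbf{m}_l|^2$ vanish identically, so $\tau_{\mathrm{G}}(\rho_l,\mathbf{m}_l) = 0$. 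Consequently $\tau_{\mathrm{mG}}(\rho_l,\jpvecix{l},\mathbf{m}_l) = \tau_l$ at every point with $\rho_l > 0$.

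Next I would apply Eq.~\eqref{eqMGSUBADD} inductively. Setting $(\rho,\jpvec,\mathbf{m})$ in that inequality equal to a partial sum $\sum_{l \leq L} (\rho_l,\jpvecix{l},\mathbf{m}_l)$ (the mixed-state side) and $(\sigma,\mathbf{k}_{\mathrm{p}},\mathbf{n}) = (\rho_{L+1},\jpvecix{L+1},\mathbf{m}_{L+1})$ (the one-spinor side), I obtain
\begin{equation}
  \tau_{\mathrm{mG}}\Bigl(\sum_{l \leq L+1} \rho_l, \sum_{l \leq L+1} \jpvecix{l}, \sum_{l \leq L+1} \mathbf{m}_l\Bigr)
   \leq \tau_{\mathrm{mG}}\Bigl(\sum_{l \leq L} \rho_l, \sum_{l \leq L} \jpvecix{l}, \sum_{l \leq L} \mathbf{m}_l\Bigr)
      + \tau_{L+1}.
\end{equation}
Iterating and passing to the limit $L \to \infty$ yields $\tau_{\mathrm{mG}}(\rho,\jpvec,\mathbf{m}) \leq \sum_l \tau_l = \tau$.

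The routine computational content is already packaged inside the previous theorem, so the main obstacle is essentially bookkeeping: ensuring the iterative application is well-defined at points where some partial-sum density vanishes, and ensuring that the limit $L \to \infty$ can be taken pointwise. Both issues are mild. Points where a partial sum $\sum_{l \leq L} \rho_l$ vanishes are also points where each $\rho_l$ vanishes for $l \leq L$, and at such points all relevant numerators vanish as well, so the inequality holds trivially (e.g.\ by interpreting $f(\rho,\mathbf{u}) = |\mathbf{u}|^2/2\rho$ as its lower-semicontinuous extension, taking the value $0$ when $\mathbf{u}=\mathbf{0}$ and $+\infty$ otherwise, as is standard for this Weizs\"acker-type quotient). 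The monotonicity of partial sums and continuity of $\tau_{\mathrm{mG}}$ in its arguments on the set $\rho>0$ then justify the pointwise limit, completing the proof.
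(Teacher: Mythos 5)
Your proposal is correct and follows essentially the same route as the paper: decompose into natural spinors, note that $\tau_{\mathrm{mG}}(\rho_l,\jpvecix{l},\mathbf{m}_l)=\tau_l$ because $\tau_{\mathrm{m}}$ is exact and $\tau_{\mathrm{G}}$ vanishes for a single spinor, and iterate the asymmetric subadditivity of Eq.~\eqref{eqMGSUBADD} with the partial sum as the mixed-state argument. Your added remarks on the induction bookkeeping, the zero-density points, and the limit $L\to\infty$ only make explicit what the paper leaves implicit.
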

\begin{proof}
Decompose the densities into contributions from individual spinors and iterate the subaddivity result in Eq.~\eqref{eqMGSUBADD},
 \begin{equation}
    \tau_{\mathrm{mG}}(\rho,\mathbf{m},\jpvec) \leq \sum_l \tau_{\mathrm{mG}}(\rho_l,\mathbf{m}_l,\jpvecix{l})  = \sum_l \tau_l = \tau.
 \end{equation}
\end{proof}

For collinear spin densities, both $\tau_{\mathrm{eig}}$ and $\tau_{\mathrm{mG}}$ reduce to a known form of the von Weizs\"acker bound. For noncollinear spin densities, the following theorem shows that $\tau_{\mathrm{mG}}$ is the sharper bound.
\begin{theorem}
  For a density triple $(\rho,\jpvec,\mathbf{m})$ arising from an arbitrary mixed state, $\tau_{\mathrm{eig}}(\rho,\jpvec,\mathbf{m}) \leq \tau_{\mathrm{mG}}(\rho,\jpvec,\mathbf{m})$.
\end{theorem}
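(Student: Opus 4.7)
My plan is to cancel the common paramagnetic-current contribution $|\jpvec|^2/(2\rho)$, which appears identically in $\tau_{\mathrm{eig}}$ and in $\tau_{\mathrm{m}}$ (and is absent from $\tau_{\mathrm{G}}$), so that the claim reduces to the pointwise algebraic inequality
\begin{equation}
\frac{|\nabla\rho_{+}|^2}{8\rho_{+}} + \frac{|\nabla\rho_{-}|^2}{8\rho_{-}} \leq \tau_{\mathrm{G}}(\rho,\mathbf{m}) + \frac{\sum_{a,b}(\nabla_a m_b)^2}{2\rho}.
\end{equation}
The mixed-state hypothesis plays no further role; the statement is a purely pointwise comparison between expressions in $\rho$, $\mathbf{m}$, $\nabla\rho$, and $\nabla\mathbf{m}$. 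I would split the argument into an exact identity that accounts for $|\mathbf{m}|$-dependence and a Cauchy--Schwarz bound that handles the rotational, ``direction-of-$\mathbf{m}$'' degrees of freedom.

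The identity I would establish first is
\begin{equation}
\frac{|\nabla\rho_{+}|^2}{8\rho_{+}} + \frac{|\nabla\rho_{-}|^2}{8\rho_{-}} = \tau_{\mathrm{G}}(\rho,\mathbf{m}) + \frac{|\nabla|\mathbf{m}||^2}{2\rho}.
\end{equation}
Setting $f^2 = \rho_{+}\rho_{-} = \tfrac{1}{4}\rho^2 - |\mathbf{m}|^2$, I would expand $\nabla f = (\rho_{-}\nabla\rho_{+} + \rho_{+}\nabla\rho_{-})/(2f)$, use $\nabla\rho = \nabla\rho_{+}+\nabla\rho_{-}$ and $2\nabla|\mathbf{m}| = \nabla\rho_{+}-\nabla\rho_{-}$, and clear the common denominator $8\rho\rho_{+}\rho_{-}$. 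The cross terms $\pm 2\rho_{+}\rho_{-}\,\nabla\rho_{+}\cdot\nabla\rho_{-}$ cancel and both sides collapse to $\rho\bigl[\rho_{-}|\nabla\rho_{+}|^2 + \rho_{+}|\nabla\rho_{-}|^2\bigr]$, establishing the identity.

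To upgrade this identity to the desired inequality, I would differentiate $|\mathbf{m}|^2 = \sum_b m_b^2$ to obtain $|\mathbf{m}|\,\nabla_a|\mathbf{m}| = \sum_b m_b\,\nabla_a m_b$ and apply Cauchy--Schwarz in the index $b$, yielding $(\nabla_a|\mathbf{m}|)^2 \leq \sum_b(\nabla_a m_b)^2$; summing over $a$ gives $|\nabla|\mathbf{m}||^2 \leq \sum_{a,b}(\nabla_a m_b)^2$. Combined with the identity, this immediately produces $\tau_{\mathrm{eig}} \leq \tau_{\mathrm{mG}}$, with equality precisely in the collinear case where the direction $\mathbf{m}/|\mathbf{m}|$ is locally constant. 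The only substantive work is the bookkeeping in the identity; the Cauchy--Schwarz step is elementary. Minor care is needed at points where $|\mathbf{m}|$ or $\rho_{-}$ vanishes, but this is an inherent feature of the definitions of $\tau_{\mathrm{eig}}$ and $\tau_{\mathrm{G}}$ and is handled by continuity, since $\nabla|\mathbf{m}|$ necessarily vanishes wherever $|\mathbf{m}|$ attains its minimum value zero.
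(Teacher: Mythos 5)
Your proposal is correct and follows essentially the same route as the paper: your identity $\tfrac{|\nabla\rho_{+}|^2}{8\rho_{+}}+\tfrac{|\nabla\rho_{-}|^2}{8\rho_{-}}=\tau_{\mathrm{G}}+\tfrac{|\nabla|\mathbf{m}||^2}{2\rho}$ is exactly the paper's alternative expression for $\tau_{\mathrm{eig}}$ (since $\tfrac{1}{2}\nabla(\rho_{+}-\rho_{-})=\nabla|\mathbf{m}|$ and $\tau_{\mathrm{G}}=|\nabla\sqrt{\rho_{+}\rho_{-}}|^2/2\rho$), and your Cauchy--Schwarz step $|\nabla|\mathbf{m}||^2\leq\sum_{a,b}(\nabla_a m_b)^2$ is the same estimate the paper obtains from the Lieb--Loss gradient inequality.
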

\begin{proof}
 The alternative expression
\begin{equation}
   \tau_{\mathrm{eig}}(\rho,\jpvec,\mathbf{m}) = \frac{|\jpvec|^2}{2\rho} + \frac{\big|\tfrac{1}{2} \nabla (\rho_{+} - \rho_{-}) \big|^2}{2\rho} + \frac{\big|\nabla \sqrt{\rho_{+} \rho_{-}} \big|^2}{2\rho}.
\end{equation} 
can be verified by a direct calculation. The last term above can be identified as $\tau_{\mathrm{G}}(\rho,\mathbf{m}) = \big|\nabla \sqrt{\rho_{+} \rho_{-}} \big|^2/2\rho$. Then, iterating the general gradient inequality $|\nabla\sqrt{f^2+g^2}|^2 \leq |\nabla f|^2 +|\nabla g|^2$~\cite[Sec.~ 6.17, 7.8]{LIEB01},
\begin{equation}
   \big|\tfrac{1}{2} \nabla (\rho_{+} - \rho_{-}) \big|^2 = \big| \nabla |\mathbf{m}| \big|^2 \leq \sum_c |\nabla m_c|^2,
\end{equation}
where the right-hand side coincides with the spin-density terms in the numerator of $\tau_{\mathrm{m}}$. The theorem follows.
\end{proof}

\section{Discussion}

The above results provide a foundation for generalization of common measures of electron localization, orbital overlap, and exchange hole curvature~\cite{BECKE_PRA39_3761,BECKE_JCP92_5397,SILVI_N371_683,DOBSON_JCP94_4328,FURNESS_MP114_1415} to a noncollinear setting. The {\it isoorbital indicator} $\alpha_{\mathrm{W}} = (\tau - \tau_{\mathrm{W}}) / \tau_{\mathrm{unif}} \geq 0$, where $\tau_{\mathrm{unif}}$ is an estimate of the kinetic energy density in a uniform electron gas, has been advocated as the proper measure of orbital overlap in meta-GGA functionals~\cite{SUN_PRL111_106401}. Theorems~\ref{thmTmEqK1}, \ref{thmTmBound}, and \ref{thmTmGBound} show that one can define analogous {\it isospinor indicators}, e.g.\ $\alpha_{\mathrm{mG}} = (\tau - \tau_{\mathrm{mG}}) / \tau_{\mathrm{unif}} \geq 0$ and $\bar{\alpha}_{\mathrm{m}} = (\bar{\tau} - \bar{\tau}_{\mathrm{m}}) / \tau_{\mathrm{unif}} \geq 0$ are able to discriminate between single- and many-spinor regions.

As in the collinear case, the quantities appearing in the lower bounds are related to the curvature of the pair density. For an uncorrelated state with natural spinors $\phi_k$, the pair density is a sum  $n(\mathbf{r},\mathbf{r}') = \rho(\mathbf{r}) \rho(\mathbf{r}') + n_{\mathrm{x}}(\mathbf{r},\mathbf{r}')$ of a direct density product and the exchange term $n_{\mathrm{x}}(\mathbf{r},\mathbf{r}') = -\sum_{kl} \phi_l(\mathbf{r}')^{\dagger} \phi_k(\mathbf{r}) \phi_k(\mathbf{r})^{\dagger} \phi_l(\mathbf{r}')$. Many density functional approximations rely on modelling of the exchange hole, defined as $h_{\mathrm{x}}(\mathbf{r},\mathbf{r}') = n_{\mathrm{x}}(\mathbf{r},\mathbf{r}') / \rho(\mathbf{r}')$~\cite{PARR89}. Some meta-GGA models rely specifically on the fact that the curvature at coinciding electron locations,
\begin{equation}
  n''_{\mathrm{x}}(\mathbf{r}) = \nabla^2 n_{\mathrm{x}}(\mathbf{r},\mathbf{r}') |_{\mathbf{r}'=\mathbf{r}},
\end{equation}
is related to $\tau-\tau_{\mathrm{W}}$ in collinear systems~\cite{BECKE_PRA39_3761,DOBSON_JCP94_4328}. With the notation $K_{ca} =  J_{ca} - i \nabla_a m_c = \sum_l \phi_l^{\dagger} \sigma_c p_a \phi_l$, where $J_{ca}$ is real, and $\tau^c = \tfrac{1}{2} \sum_l (\nabla \phi_l)^{\dagger} \cdot \sigma_c \nabla \phi_l$, the exchange density curvature is
\begin{equation}
  n''_{\mathrm{x}} = 2\rho(\tau - \tau_{\mathrm{W}}) + 4m_c \tau^c - K_{ca}^* K_{ca} - \frac{1}{2} \rho \nabla^2 \rho - 2 m_c \nabla^2 m_c.
\end{equation}
For any location in space, one may align the coordinate system so that $\mathbf{m}$ is parallel to the $z$-axis.  Expanding the inequality $\psi^{\dagger} \psi \geq 0$, with $\psi = (1-\sigma_c) (p_a + A_a) \phi$ and $A_a = -\jpcomp{a}/\rho$, yields $4m_c \tau^c - J_{ca}^2 \leq 4|\mathbf{m}| (\tau - |\jpvec|^2/2\rho)$. Thus,
\begin{equation}
 \begin{split}
  n''_{\mathrm{x}} & \leq 2(\rho+2|\mathbf{m}|) \Big( \tau - \frac{|\jpvec|^2}{2\rho} - \frac{|\nabla\rho|^2 + 4|\nabla_a m_b|^2}{8(\rho+2|\mathbf{m}|)} \Big)
             \\
      & \ \ \  - \frac{1}{2} \rho \nabla^2 \rho - 2 m_c \nabla^2 m_c.
 \end{split}
\end{equation}
Hence, the exchange-hole curvature involves the sum of $\tau_{\mathrm{W}}$ and $\tau_{\mathrm{m}}$, corrected by a $\jpvec$-dependent term that yields the correct gauge dependence.

The functions $\tau_{\mathrm{mG}}$ and $\bar{\tau}_{\mathrm{m}}$ also provide the foundation for a new type of orbital-free kinetic energy functionals, $T_{\mathrm{mG}}[\rho,\jpvec,\mathbf{m}] = \int \tau_{\mathrm{mG}}(\rho(\mathbf{r}),\jpvec(\mathbf{r}),\mathbf{m}(\mathbf{r})) \, d\mathbf{r}$ and $\bar{T}_{\mathrm{m}}[\rho,\jmvec,\nabla\cdot\mathbf{m}] = \int \bar{\tau}_{\mathrm{mG}}(\rho(\mathbf{r}),\jmvec(\mathbf{r}),\nabla\cdot\mathbf{m}(\mathbf{r})) \, d\mathbf{r}$, which incorporate noncollinear spin densities and are rigorous lower bounds on the true kinetic energy.  In order to stay within the strict $(\rho,\jmvec)$-formulation of CDFT, the divergence term must either be expressed as a functional of $(\rho,\jmvec)$, which is likely to be complicated, or simply omitted. Omitting the term preserves the lower bound property, while exactness in the single-spinor case is lost and the term is generally not small.

Finally, the fact that the results presented here bound not only the kinetic energy, but apply pointwise to the kinetic energy density, facilitates numerical exploration. The bounds apply to any densities obtained from numerical electronic-structure structure methods. Additionally, the bounds apply when $\phi_l(\mathbf{r})$ and $\nabla\phi_l(\mathbf{r})$ are treated as random variables. Orthonormality constrains only integrals over all space, not pointwise values. Hence, adopting a statistical model and generating pointwise densities from randomized spinors is a possible direction for obtaining numerical quantification of the gaps between the exact kinetic energies and the lower bounds.  Modifications of the bounds that improve their statistical accuracy can also be explored.

\section*{Acknowledgments}

This work was supported by the Norwegian Research Council through the Grant No.~240674 and CoE Centre for Theoretical and Computational Chemistry (CTCC) Grant No. 179568/V30. The author thanks T.~Helgaker, A.~Borgoo, A.~Laestadius, and S.~Kvaal for useful discussions.


\newcommand{\noopsort}[1]{} \newcommand{\printfirst}[2]{#1}
  \newcommand{\singleletter}[1]{#1} \newcommand{\switchargs}[2]{#2#1}
%

\end{document}